\newtheorem{thm}{Theorem}
\newtheorem{lem}{Lemma}
\newtheorem{cor}{Corollary}
\begin{document}

\title{Tailgating quantum circuits for high-order energy derivatives}

\author{Jack Ceroni}
\email{jack.ceroni@mail.utoronto.ca}
\affiliation{Xanadu, Toronto, ON, M5G 2C8, Canada}
\affiliation{Department of Mathematics, University of Toronto, Toronto, ON, M5S 3E1, Canada}
\author{Alain Delgado}
\affiliation{Xanadu, Toronto, ON, M5G 2C8, Canada}
\author{Soran Jahangiri}
\affiliation{Xanadu, Toronto, ON, M5G 2C8, Canada}
\author{Juan Miguel Arrazola}
\affiliation{Xanadu, Toronto, ON, M5G 2C8, Canada}

\date{\today}

\begin{abstract}
To understand the chemical properties of molecules, it is often important to study derivatives of energies with respect to nuclear coordinates or external fields. Quantum algorithms for computing energy derivatives have been proposed, but only limited work has been done to address the specific challenges that arise in this context, where calculations are more complicated and involve more stringent requirements on accuracy compared to single-point energy calculations. In this work, we introduce a technique to improve the performance of variational quantum circuits calculating energy derivatives. The method, which we refer to as tailgating, is an adaptive procedure that selects gates based on their gradient with respect to the expectation value of Hamiltonian derivatives. These gates are then added at the end of a quantum circuit originally designed to calculate ground- or excited-state energies. A distinguishing feature of this approach is that the appended gates do not need to be optimized: their parameters can be set to zero and varied only for the purpose of computing energy derivatives, via calculating derivatives with respect to circuit parameters. We support the validity of this method by establishing sufficient conditions for a circuit to compute accurate energy gradients. This is achieved through a connection between energy derivatives and eigenstates of Taylor approximations of the Hamiltonian. We illustrate the advantages of the tailgating approach by performing simulations calculating the vibrational modes of beryllium hydride and water: quantities that depend on second-order energy derivatives.

\end{abstract}

\maketitle

\section{Introduction}

Quantum algorithms have been studied as a potential avenue to address some of the challenges of accurately simulating the electronic structure of molecules~\cite{mcardle2020quantum, aspuru2005simulated, google2020hartree, lanyon2010towards}. Advanced quantum algorithms for performing these calculations require a large number of qubits and deep circuits~\cite{cao2019quantum, o2021efficient, su2021fault, low2019hamiltonian}, which makes them challenging to implement on currently available quantum hardware or classical simulators. Variational quantum algorithms (VQAs) have been instead considered as a platform to perform proof-of-principle simulations of molecules on existing devices. VQAs are a paradigm for designing quantum algorithms in which the structure of a quantum circuit is fixed and the gate parameters are optimized to minimize a suitable cost function, typically the expectation value of an observable with respect to the prepared wavefunction \cite{cerezo2021variational}.  

In the context of quantum chemistry, one of the most-studied algorithms is the variational quantum eigensolver (VQE)~\cite{kandala2017hardware, peruzzo2014variational}. Here, a quantum circuit is optimized in order to minimize the expectation value of a molecular Hamiltonian with respect to the output state of the circuit, yielding an approximation of the ground-state energy. Similar algorithms have been proposed to calculate other electronic properties such as excited-state energies~\cite{higgott2019variational, nakanishi2019subspace, mcclean2017hybrid, parrish2019quantum}, equilibrium geometries~\cite{delgado2021variational}, dipole transition amplitudes~\cite{ibe2020calculating}, transition states~\cite{azad2022quantum}, and molecular dynamics \cite{sokolov2021microcanonical}. It has also been shown that computing first- and second-order energy derivatives of ground and excited state energies can be performed using variational methods~\cite{mitarai2020theory,o2021efficient,azad2022quantum}. Second-order energy derivatives can be used to characterize the vibrational structure of molecules, which allows for the prediction of chemical properties such as vibronic spectra~\cite{huh2015boson}, simulating local mode dynamics~\cite{sparrow2018simulating, jahangiri2020quantum} and electron transport \cite{jahangiri2021quantum}, and computing thermodynamic observables~\cite{stober2022considerations}.

Beyond their scope of application, the techniques used for constructing variational quantum circuits have also become more sophisticated. One popular approach is the ADAPT-VQE method~\cite{grimsley2019adaptive, tang2021qubit, bilkis2021semi}. Starting from a pool of gates, this algorithm iteratively adds gates to a circuit by selecting those with large gradients with respect to the target cost function. This procedure has the advantage of being tailored to an individual Hamiltonian and typically leads to shorter-depth circuits without sacrificing accuracy. 
ADAPT-VQE has shown good results when used to compute energy eigenvalues of some fixed Hamiltonian~\cite{claudino2020benchmarking}, and circuits constructed with adaptive methods have been effective for molecular geometry optimization procedures in small molecules, where it is required to calculate first-order energy derivatives~\cite{delgado2021variational}. However, no prior work has been done to analyze the capability of adaptive circuits for computing second-order energy derivatives beyond the simple case of the hydrogen molecule~\cite{mitarai2020theory}.

In this work, we argue that standard adaptive procedures fail to produce circuits capable of calculating accurate high-order energy derivatives because they only take into account limited information about the problem Hamiltonian. We then address this issue by introducing the method of \emph{tailgating}: a procedure in which parameterized gates are adaptively added to the end of a variational circuit, with their ``optimized" variational parameters set to zero. The gates are selected by evaluating their gradient with respect to the expectation value of \textit{derivatives} of the Hamiltonian. They do not contribute to the preparation of the ground state, but do have an effect when calculating energy derivatives, increasing the accuracy of the calculations. This procedure is particularly suitable for adaptive circuits, but in principle any circuit can be tailgated to increase accuracy when calculating energy derivatives. 

We support the validity of this method by establishing sufficient conditions for a quantum circuit to calculate accurate energy derivatives. This is achieved through a connection between Taylor approximations of the Hamiltonian and energy derivatives. Finally, we illustrate the advantages of tailgating with numerical examples, calculating the energy Hessians of beryllium hydride and water molecules. We subsequently use the Hessians to compute the vibrational frequencies of each molecule. We demonstrate that, in both cases, tailgated circuits provide values that are considerably more accurate than those obtained from standard adaptive circuits.

\section{Computing Accurate Energy Derivatives}
\label{sec:background}

We begin by introducing notation that will be used throughout this work. Let $U(\theta)$ be a parameterized quantum circuit that prepares an output state  $|\psi(\theta)\rangle = U(\theta) |0\rangle$, where $\theta=(\theta_1, \theta_2, \ldots, \theta_M)$ are the parameters of the circuit. We use $H(R)$ to represent a parameterized Hamiltonian, where $R=(R_1, R_2, \ldots, R_N)$ is a vector of parameters. For concreteness, we focus on the case where $R$ represents the nuclear coordinates of a molecule. Let $\theta^*(R)$ denote the optimal parameters such that the output state $|\psi(\theta^*(R))\rangle$ closest approximates the ground state $\ket{\psi_0(R)}$. We also describe the corresponding ground-state energy as 
\begin{equation}
    E(R) = \langle \psi_0(R) | H(R) |\psi_0(R)\rangle.
\end{equation}
We focus on the ground state for simplicity, but the analysis described throughout this work applies equally to any eigenstate of the Hamiltonian. Finally, we define 

\begin{equation}
\tilde{E}(\theta, R) = \langle \psi(\theta) | H(R) | \psi(\theta) \rangle,
\end{equation}
and $\tilde{E}(R) = \tilde{E}(\theta^{*}(R), R)$ as the approximation of the energy eigenvalue resulting from the output state of the circuit. 
For simplicity, we use $|\psi(R)\rangle$ as a shorthand for $|\psi(\theta^{*}(R))\rangle$. We work under the assumption that the family of Hamiltonians $H(R)$ is non-degenerate, so we can truly speak of a single ground state.

After identifying and optimizing a circuit $U(\theta)$ such that $\tilde{E}(R_0)\approx E(R_0)$ for a given $R_0$, e.g., the equilibrium geometry of the molecule, our goal is to compute accurate $n$-th order energy derivatives at $R_0$. In other words, we want to ensure that

\begin{equation}
    \frac{\partial^{n} \tilde{E}(R_0)}{\partial R_{j_1} \cdots R_{j_n}} \approx \frac{\partial^{n} E(R_0)}{\partial R_{j_1} \cdots R_{j_n}},
\end{equation}
for some choice of $n$ and coordinates $\{R_{j_1}, \ \ldots, \ R_{j_n}\}$.

Now that we have introduced the necessary notation, we briefly discuss the computation of energy derivatives, and the associated difficulties of performing these calculations. Computing quantities which rely on $n$-th order energy derivatives, with variational quantum circuits, already suffer from the issue of high sampling cost. This occurs even in the case of second-order derivatives, where the number of samples required to compute a single second-order energy derivative to error $\epsilon$ scales as \cite{mitarai2020theory}

\begin{equation}
   N_{\text{samples}} = \tilde{O} \left( \frac{n^4 M^2}{\epsilon'^2} \right). 
\end{equation}
where $n$ is the number of qubits, $M$ is the number of gate parameters in the variational circuit, and $\epsilon'$ is an error parameter defined in Ref.~\cite{mitarai2020theory}. It depends linearly on $\epsilon$, and implicitly on $n$ and $M$. In this paper, we do not focus on the important problem of sample complexity, but rather on a different roadblock related to the choice of circuit ansatz. When computing energy derivatives, there will generally be contributions due to derivatives of the state with respect to the nuclear coordinates. This can be seen for example in the expression for second-order derivatives~\cite{azad2022quantum}:

\begin{align}
    \label{eqn:s}
    \frac{\partial^2 E(R)}{\partial R_i\partial R_j} &= \langle \psi_0(R) | \frac{\partial^2 H(R)}{\partial R_i\partial R_j} | \psi_0(R)\rangle \nonumber\\
    &+ 2\text{Re}\left[\langle \psi_0(R) | \frac{\partial H(R)}{\partial R_i} \frac{\partial |\psi_0(R)\rangle}{\partial R_j}\right],
\end{align}
which contains a contribution from the state derivative $\partial |\psi_0(R)\rangle/\partial R_j$. Expressions for derivatives of this form are derived in Appendix~\hyperref[sec:appb]{B}. This suggests, at a high level, that in order to evaluate energy derivatives, we need a circuit that can prepare approximate ground states not only at $R_0$, but also in some neighbourhood around $R_0$.

Guaranteeing that a particular circuit possesses this property is not straightforward. For standard adaptive methods that only select gates based on the Hamiltonian $H(R_0)$, we argue that there may be situations in which certain omitted gates have negligible gradient at $R_0$, but are required to produce accurate energy derivatives. More precisely, these omitted gates, while having vanishingly small gradients during standard adaptive algorithms like ADAPT-VQE, have considerable effects on the \textit{degree} of accuracy to which the state $|\psi(R)\rangle$ matches the state $|\psi_0(R)\rangle$, in some neighbourhood around $R_0$. This change in accuracy can affect state derivatives. For instance, suppose that

\begin{equation}
    |\psi(R)\rangle = U(\theta^{*}(R))|0\rangle = |\psi_0(R)\rangle + \epsilon(R) |\zeta(R) \rangle,
    \label{eq:error}
\end{equation}
\noindent
for all $R$ in a neighbourhood around $R_0$, and $\epsilon(R) = || |\psi(R)\rangle - |\psi_0(R)\rangle ||$. If $\epsilon(R)$ is sufficiently small for a neighbourhood around $R_0$, the circuit $U$ prepares a state $|\psi(R)\rangle$ which is a good approximation for $|\psi_0(R)\rangle$. However, it could be the case that the magnitude of $\partial \epsilon(R) |\zeta(R)\rangle /\partial R_j$ is large, implying that first derivatives of $|\psi(R)\rangle$ and $|\psi_0(R)\rangle$ differ significantly. Intuitively, even though $|\psi(R)\rangle$ is close to $|\psi_0(R)\rangle$, the error between the two states could change rapidly as a function of $R$. This could then lead to large differences in second-order energy derivatives, as is indicated by Eq.~\eqref{eqn:s}.

Evidence for this phenomenon occurring in circuits prepared by adaptive methods is illustrated in Fig.~\hyperref[fig:gr]{1}. Here, we plot the fidelity $|\langle \psi(R_0 + \delta e_H) | \psi_0(R_0 + \delta e_H) \rangle|^2$, where $|\psi_0(R)\rangle$ is the ground state of the BeH$_2$ molecule at $R$ and $\delta e_H$ is a vector which stretches the two hydrogen atoms a distance $\delta$ in opposite directions. We adaptively prepare a circuit $U$ with respect to the Hamiltonian $H(R_0)$ using a pool of single and double excitation gates \cite{arrazola2021universal}. We then perform VQE (to a pre-specified convergence criterion of gradients having magnitude less than $10^{-5}$) for a collection of $\delta$ ranging from $-0.75 \ \text{Bohr}$ to $0.75 \ \text{Bohr}$. The fidelity is close to $1$ for all $\delta$, but the derivative of the fidelity with respect to $\delta$ is non-zero and decreasing for large $\delta$.

\begin{figure}
\includegraphics[width=240pt]{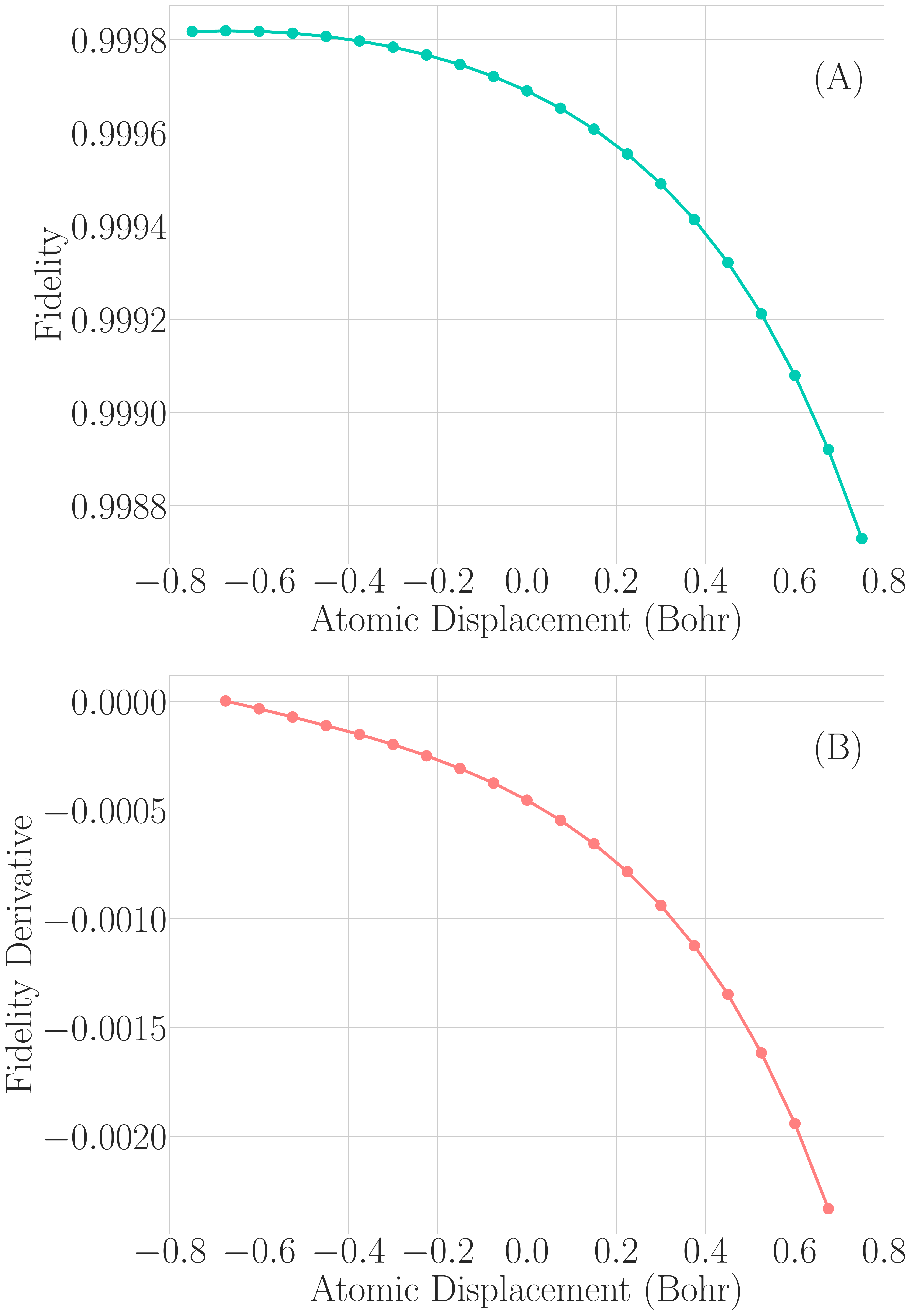}
\caption{(A) Fidelity $|\langle \psi(R_0 + \delta e_H) | \psi_0(R_0 + \delta e_H) \rangle|^2$ as a function of $\delta$: the distance the hydrogen atoms are displaced from equilibrium. Here we consider the fidelity between the ground state of the BeH$_2$ molecule and an approximation prepared by an optimized circuit yielded from ADAPT-VQE. (B) The derivative of the fidelity, computed using a central finite-difference.}
\label{fig:gr}
\end{figure}

Now, we briefly develop some intuition as to how this issue can be resolved by considering gradients of gates with respect to derivatives of the Hamiltonian $H(R)$. Consider an adaptive procedure, where gates $\{G_a\}$ are selected to build an ansatz $U$ which yields an approximation to the ground state of $H(R_0)$, based on the value of gradients of the form
\begin{equation}
    \label{eq:der}
    \frac{\partial}{\partial \theta_a}\bra{\psi} G_a(\theta_a)^{\dagger} H(R_0) G_a(\theta_a) \ket{\psi},
\end{equation}
where $|\psi\rangle$ is a state constructed during the previous iteration of the adaptive procedure. If the value of this gradient is large, then the gate $G_a$ is selected for the circuit ansatz which is used to prepare ground state of $H(R_0)$. Eventually, after iterating through all gates in $\{G_a\}$, we construct the ansatz $U$ which yields $|\psi(R_0)\rangle \approx |\psi_0(R_0)\rangle$. 

Now, for each $G_a$, consider the function

\begin{equation}
    \Delta_a(R) = \frac{\partial}{\partial \theta_a} \langle \psi(R_0) | G_a(\theta_a)^{\dagger} H(R) G_a(\theta_a) | \psi(R_0)\rangle |_{\theta_a = 0}.
\end{equation}
As in the original adaptive procedure, $\Delta_a(R)$ can be thought of as measuring the effect of the gate $G_a$ on an optimization procedure which attempts to minimize the expectation value of $H(R)$, starting from the state $|\psi(R_0)\rangle$. In general, we will expect $\Delta_a(R)$ to be small at $R = R_0$ and likely also in a neighbourhood around $R$, as adding any gate that was already selected or not previously selected will likely not have a large effect on improving the accuracy of the approximate ground state $|\psi(R_0)\rangle$. However, these gates could still provide small corrections to the state. As was indicated above, this is precisely what we should consider to ensure that first-order state derivatives are accurate. Thus, we turn our attention to the \textit{derivatives} of $\Delta_a(R)$ evaluated at $R = R_0$,
\vspace{-7pt}

\begin{equation}
    \frac{\partial \Delta_a(R_0)}{\partial R_j} = \frac{\partial}{\partial \theta_a} \langle \psi(R_0) | G_a(\theta_a)^{\dagger} \frac{\partial H(R_0)}{\partial R_j} G_a(\theta_a) | \psi(R_0)\rangle.
\end{equation}

If a derivative $\partial \Delta_a(R_0) / \partial R_j$ is large, this indicates that a gate $G_a$ contributes to the preparation of $|\psi(R_0)\rangle$ and $|\psi(R)\rangle$ for $R$ close to $R_0$ with significantly different magnitudes, relative to the distance from $R_0$ to $R$. If we add all gates of this form to the ansatz, our hope is that this will lead to more accurate state derivatives, which will in turn lead to more accurate energy derivatives. 

This treatment only takes into account first-order differences, so a natural subsequent question to ask is whether we have to consider higher-order derivatives of $\Delta_a(R)$ to prepare an ansatz which yields accurate higher-order energy derivatives. To answer this question in the affirmative, in the following sections we take a more rigorous approach, via Taylor series expansions of the Hamiltonian $H(R)$. 

\subsection{Taylor Series Expansions of Hamiltonians}

Keeping the importance of Hamiltonian derivatives in mind, we turn our attention to finding sufficient conditions for a circuit $U(\theta)$ that can approximate ground states $|\psi_0(R)\rangle$ in a neighbourhood around $R_0$ by Taylor-expanding $H(R)$ around $R_0$. In this form, we approximate $H(R)$ in terms of its derivatives at $R_0$. Let $H_{n}(R)$ be the $n$-th order Taylor approximation of $H$ centred at $R_0$:

\begin{align}
\label{eqn:taylor}
   & H_{n}(R) = H(R_0) + \displaystyle\sum_{j = 1}^N \frac{\partial H(R_0)}{\partial R_j} (R_j - R_{0j}) + \cdots \nonumber\\
   &+ \frac{1}{n!} \displaystyle\sum_{j_1, \ldots, j_n = 1}^N \frac{\partial^{n} H(R_0)}{\partial R_{j_1} \cdots \partial R_{j_{n}}} (R_{j_1} - R_{0j_1}) \cdots (R_{j_{n}} - R_{0j_n}). 
\end{align}

The Taylor expansion $H_n(R)$ provides a good approximation of $H(R)$ when $R$ is close to $R_0$, provided we truncate at sufficiently high order. Thus, it is reasonable to suspect that for sufficiently large $n$, the eigenvalues of $H_n$ closely approximate those of $H$ for a small neighbourhood around $R_0$. As a result, the energy derivatives computed with respect to the two Hamiltonians are equal to some order. This is formalized in the following theorem:

\begin{thm}
\label{thm:a}
 Let $|\phi_0(R)\rangle$ be the ground state of $H_{n - 1}(R)$. Suppose that there exists $\delta > 0$ such that a circuit $U(\theta)$ can prepare the state $|\phi_0(R)\rangle$ for all $R$ such that $0 < \| R - R_0 \| < \delta$. Then it holds that

\begin{equation}
\frac{\partial^{p} E(R_0)}{\partial R_{j_1} \cdots \partial R_{j_p}} = \frac{\partial^{p} \tilde{E}(R_0)}{\partial R_{j_1} \cdots \partial R_{j_p}},
\end{equation}
for all $p$ such that $0 \leq p \leq n$, and all choices of coordinates, $\{R_{j_1}, \ ..., \ R_{j_p}\}$.
\end{thm}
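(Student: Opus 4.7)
The plan is to combine three ingredients: a Taylor-remainder estimate for $H(R)-H_{n-1}(R)$, analytic perturbation theory to transfer that estimate to the ground states, and the variational principle to effectively double the order of the resulting comparison between $E(R)$ and $\tilde{E}(R)$.

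By construction, $H(R) - H_{n-1}(R) = O(\|R-R_0\|^n)$ in operator norm, and more strongly $H$ and $H_{n-1}$ share Taylor coefficients at $R_0$ through order $n-1$. Since $H(R)$ is a polynomial family of finite-dimensional self-adjoint operators with a non-degenerate ground-state eigenvalue near $R_0$, Kato-Rellich perturbation theory (applied for instance along each ray $R_0 + tv$) gives analytic branches $|\psi_0(R)\rangle$ and $|\phi_0(R)\rangle$ of the ground states of $H(R)$ and $H_{n-1}(R)$ respectively, whose Taylor coefficients at $R_0$ are determined recursively (via Rayleigh-Schr\"odinger) by the Taylor coefficients of the corresponding Hamiltonian. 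Fixing compatible phase conventions so that $|\psi_0(R_0)\rangle = |\phi_0(R_0)\rangle$, these two state expansions therefore agree through order $n-1$, giving
\[
\big\| |\phi_0(R)\rangle - |\psi_0(R)\rangle \big\| = O\!\left(\|R - R_0\|^n\right).
\]

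Setting $|\delta(R)\rangle = |\phi_0(R)\rangle - |\psi_0(R)\rangle$, normalization of both states forces $2\,\text{Re}\langle \psi_0(R)|\delta(R)\rangle = -\|\delta(R)\|^2$, and combining with $H(R)|\psi_0(R)\rangle = E(R)|\psi_0(R)\rangle$ yields
\[
\tilde{E}(R) - E(R) = \langle \delta(R)|\,H(R) - E(R)\,|\delta(R)\rangle = O\!\left(\|R-R_0\|^{2n}\right).
\]
Since $\tilde{E}-E$ is smooth in $R$ and vanishes to order $2n$ at $R_0$, every partial derivative of order at most $2n-1$ vanishes there; because $n \leq 2n-1$ for $n \geq 1$, the theorem follows (and the $p = 0$ case reduces to $|\psi_0(R_0)\rangle = |\phi_0(R_0)\rangle$).

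The step I expect to be the main obstacle is the second one: the multivariate version of Kato-Rellich perturbation theory needs some care, and one must verify that the Rayleigh-Schr\"odinger recursion really does express the $k$-th Taylor coefficient of $|\psi_0(R)\rangle$ solely in terms of the Taylor coefficients of $H(R)$ of order at most $k$, and that the phase/normalization convention can be chosen uniformly in $R$ so that $|\psi_0\rangle$ and $|\phi_0\rangle$ may be compared term by term. The variational step is then short, but it is essential for upgrading the state-level agreement to the $2n$-fold agreement needed for the conclusion.
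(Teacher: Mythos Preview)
Your overall strategy---perturbation theory to compare $|\phi_0(R)\rangle$ with $|\psi_0(R)\rangle$, followed by the quadratic variational identity to double the order of vanishing---is sound, and in fact the variational step is a clean substitute for the paper's Lemma~1 (which encodes the same ``eigenstate kills the top-order state derivative'' phenomenon by unpacking the product rule). The paper also proves your step~2 essentially the same way, by arguing that the Rayleigh--Schr\"odinger recursion for the $k$-th eigenvector derivative depends only on Hamiltonian derivatives of order at most $k$ (its Lemma~2).

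There is, however, a genuine gap in your third step. You set $|\delta(R)\rangle = |\phi_0(R)\rangle - |\psi_0(R)\rangle$ and then assert
\[
\tilde{E}(R) - E(R) = \langle \delta(R)\,|\,H(R) - E(R)\,|\,\delta(R)\rangle,
\]
but this identity would say $\tilde{E}(R) = \langle \phi_0(R)|H(R)|\phi_0(R)\rangle$, which is not how $\tilde{E}$ is defined. By the paper's conventions, $\tilde{E}(R) = \langle \psi(R)|H(R)|\psi(R)\rangle$ where $|\psi(R)\rangle = U(\theta^{*}(R))|0\rangle$ and $\theta^{*}(R)$ is the parameter for which the circuit output is \emph{closest to $|\psi_0(R)\rangle$}. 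The hypothesis only guarantees that $U$ \emph{can} prepare $|\phi_0(R)\rangle$, not that the optimal parameters do so; $|\psi(R)\rangle$ need not equal $|\phi_0(R)\rangle$, and since $\theta^{*}$ is not defined by energy minimization you cannot even use a variational inequality between the two energies.

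The fix is the one step the paper makes that you omit: because $U$ can prepare $|\phi_0(R)\rangle$ and $|\psi(R)\rangle$ is the closest preparable state to $|\psi_0(R)\rangle$, one has
\[
\big\|\,|\psi(R)\rangle - |\psi_0(R)\rangle\,\big\| \;\le\; \big\|\,|\phi_0(R)\rangle - |\psi_0(R)\rangle\,\big\| \;=\; O\!\left(\|R-R_0\|^{\,n}\right).
\]
Now apply your quadratic identity with $|\delta(R)\rangle = |\psi(R)\rangle - |\psi_0(R)\rangle$ to get $\tilde{E}(R)-E(R) = \langle \delta|H-E|\delta\rangle$, whose modulus is bounded by $\|H(R)-E(R)\|\cdot\|\delta(R)\|^2 = O(\|R-R_0\|^{2n})$; the rest of your argument then goes through unchanged (and indeed yields the stronger conclusion that derivatives agree through order $2n-1$, provided $\tilde{E}$ is that smooth). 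With this insertion your route and the paper's coincide at the state level; the difference is only that you package the final energy comparison via the quadratic identity rather than the paper's derivative-by-derivative reduction.
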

\noindent The proof is given in Appendix~\hyperref[sec:app]{A}. 

The implication of this result is crucial for our purposes: circuits that can compute accurate energy eigenstates of Taylor approximations of the Hamiltonian, in some neighbourhood around $R_0$, can be used to calculate accurate energy derivatives. Importantly, if the circuit can prepare accurate eigenstates for a Taylor expansion of order $n-1$, then it is possible to calculate accurate energy derivatives up to order $n$. In the case of $n = 1$, our result implies that to compute accurate first-order derivatives, it suffices that the circuit can prepare accurate ground states of the Hamiltonian $H(R_0)$ at the equilibrium geometry. 

This supports the observation that adaptive circuits are effective for computing first-order derivatives~\cite{delgado2021variational}. However, the sufficient condition of Theorem~\ref{thm:a} indicates that the ability to prepare ground states at equilibrium may not be enough for higher-order derivatives. In such cases, it is helpful to design circuits that can prepare ground states of higher-order Taylor approximations in the neighbourhood around equilibrium. This in turn can be achieved by focusing on the derivatives of the Hamiltonian, which determine the coefficients in the Taylor expansion. This is the main insight behind the tailgating algorithm that we describe next. 

\section{Tailgating}


\begin{figure}
\includegraphics[width=250pt]{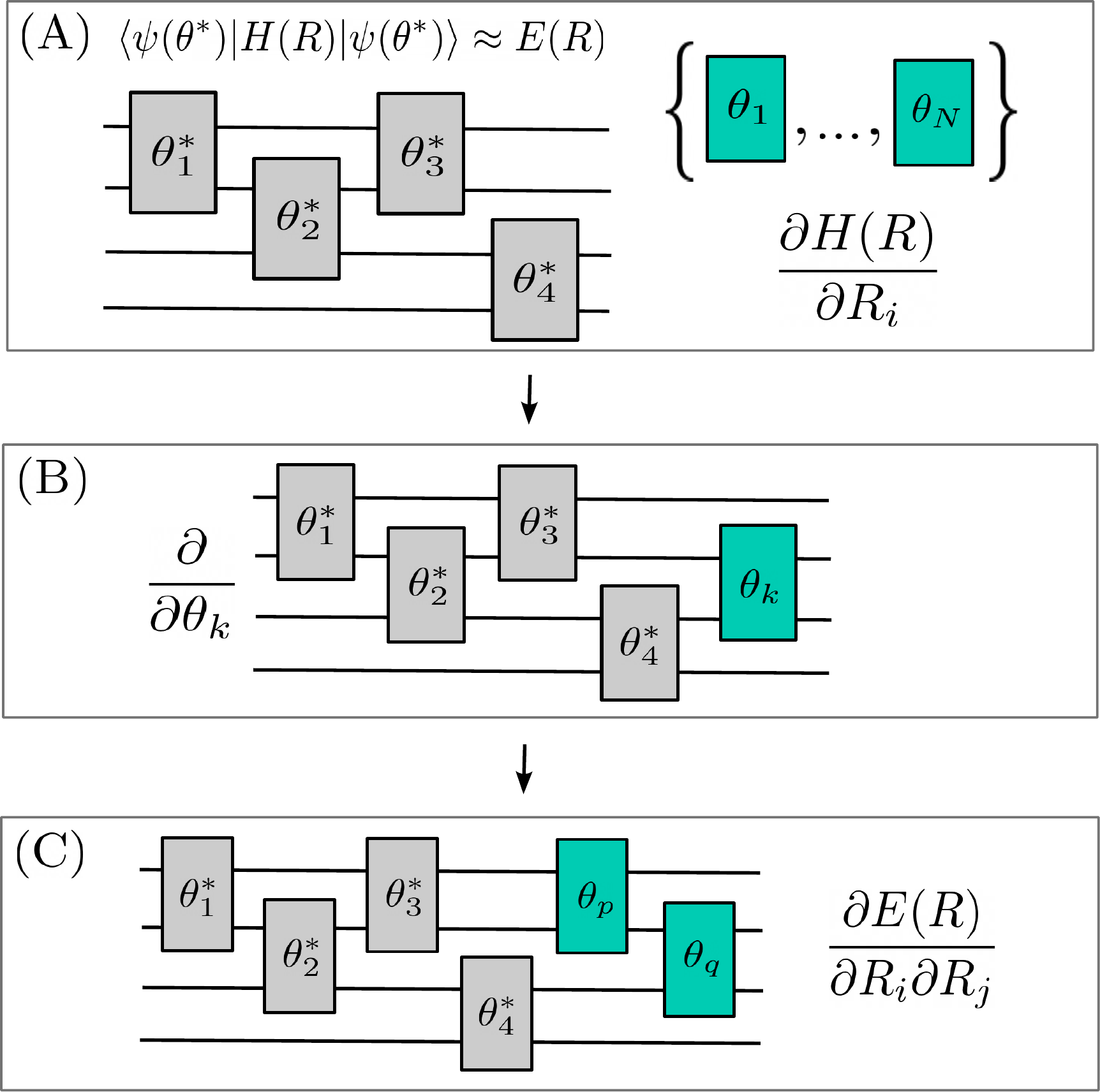}
\caption{An illustration of the tailgating procedure for the case of second-order energy derivatives. (A) We begin with a circuit which can prepare good approximations of the ground state energy of $H(R)$, along with a pool of gates and Hamiltonian derivatives. (B) We perform an adaptive selection of gates by iteratively applying each gate in the pool to the end of the circuit, and computing derivatives of this particular gate with respect to Hamiltonian derivatives. We add all gates with non-negligible gradients to the new, tailgated circuit. (C) This circuit can then be used to compute accurate second-order energy derivatives. Note that to calculate the desired energy derivatives, we evaluate derivatives of the added gates at $\theta_p = \theta_q = 0$.}
\label{fig:tail}
\end{figure}

Similar to the expression in Eq.~\eqref{eq:der}, consider an adaptive procedure where gates $\{G_a\}$ are selected based on the value of gradients of the form
\begin{equation}
    \frac{\partial C_n(\theta)}{\partial \theta_k}
:=\frac{\partial}{\partial \theta_k}\bra{\psi} G_a(\theta)^{\dagger} H_{n-1}(R) G_a(\theta) \ket{\psi}.
\end{equation}
If the value of these gradients is large for all $R$ around $R_0$, then the gate $G_a$ can be helpful in preparing ground states of the Taylor approximation $H_{n - 1}(R)$ in this region. From Theorem~\ref{thm:a}, this implies that it can aid in calculating accurate energy gradients. Rather than performing this calculation for a large collection of points $R$, we instead employ Eq.~\eqref{eqn:taylor} to express $H_{n - 1}(R)$ as a linear combination of Hamiltonian derivatives. For example, consider the case $n=2$, where we have 

\begin{align}
    \frac{\partial C_2(\theta)}{\partial \theta_k}& = \frac{\partial}{\partial \theta_k}\bra{\psi} G_a(\theta)^{\dagger} H(R_0) G_a(\theta) \ket{\psi} \nonumber\\
    & + \sum_{j = 1}^N (R_j - R_{0j}) \frac{\partial}{\partial \theta_k}\bra{\psi} G_a(\theta)^{\dagger} \frac{\partial H(R_0)}{\partial R_j} G_a(\theta) \ket{\psi}.
\end{align}
There are contributions due to both $H(R_0)$ and $\frac{\partial H(R_0)}{\partial R_j}$, meaning we should select gates depending on their gradient with respect to both terms. This selection procedure will capture all of the gates which have non-negligible gradients relative to $H_2(R)$. The same holds true for higher-order $H_n(R)$, where we compute expectation values with respect to higher-order derivatives of $H(R)$ at $R_0$ as well. 

A naive approach based on the gate-selection strategy outlined above is to evaluate gradients with respect to all of the necessary Hamiltonian derivatives at each step of the adaptive procedure, and then optimize the resulting circuit. This can be unnecessarily costly if there are many additional gates arising from the Hamiltonian derivatives that would otherwise be absent when considering only $H(R_0)$, as in standard adaptive schemes.

An improvement comes from realizing that only the gates that are needed to prepare approximate ground states need to be optimized. From Theorem~\ref{thm:a}, it suffices to add gates such that the circuit is \emph{capable} of preparing ground states of $H_{n - 1}(R)$, but there is no need to optimize them to do so. These gates can simply be added at the end of the circuit with their initial parameters set to zero. They play a role only when computing the expressions for energy gradients, which may require shifting their parameters away from zero, as we will have to compute derivatives with respect to circuit parameters (see Eq.~\eqref{eq:mit1} and Eq.~\eqref{eq:mit2}). Below we describe the tailgating procedure in full:

\begin{enumerate}
    \item Given some Hamiltonian $H(R)$, identify a variational circuit $U(\theta)$ which can approximate the ground state of $H(R)$ at $R_0$. Let $U(\theta^{*}(R))$ denote the optimized circuit.
    \item Let $\mathcal{H}$ be the collection of Hamiltonian derivatives of $H(R)$ up to order $n-1$, evaluated at $R_0$. Similarly, let $\{G_a\}$ be a pool of parameterized quantum gates and for each $G_a$, define $V_a(\theta) = G_a(\theta) U(\theta^{*}(R_0))$. Then for each Hamiltonian derivative $\frac{\partial^m H(R_0)}{\partial j_1 \cdots \partial j_m} \in \mathcal{H}$, compute the magnitudes of the derivatives 
    \begin{align}
    & & \frac{\partial}{\partial \theta_k} \Big\langle \psi(R_0) \Big| V_a(\theta)^{\dagger} \frac{\partial^m H(R_0)}{\partial j_1 \cdots \partial j_m} V_a(\theta) \Big| \psi(R_0)\Big\rangle,
    \end{align}
    evaluated at $\theta = 0$. If at least one is larger than some $\epsilon > 0$, add $G_a$ to the list $L$.
    \item Let $Q(\theta) = \prod_{G_a \in L} G_a(\theta_a)$. Define the new variational circuit $U_{\text{tailgated}}$ to be $U_{\text{tailgated}}(\theta, \theta') = Q(\theta) U(\theta')$, and new optimal parameters to be $(0, \theta^{*}(R))$. 
\end{enumerate}
Assuming $Q(0) = \mathbbm{1}$, we have
    \begin{equation}
    U_{\text{tailgated}}(0, \theta^{*}(R))|0\rangle = U(\theta^{*}(R))|0\rangle = |\psi(R)\rangle,
    \end{equation}
so the new circuit $U_{\text{tailgated}}$ approximates the ground state just as the original circuit did. However, using the previous justification, this circuit is designed to also be capable of preparing ground states of $H_{n - 1}(R)$ in a neighbourhood around $R_0$, and hence also accurate $n$-th order energy derivatives.

We refer to this procedure as \textit{tailgating}, as we are effectively adding a sequence of gates to the end of a circuit $U$, to get a new circuit $U_{\text{tailgated}}$, but are not optimizing any of their corresponding variational parameters. The entire tailgating procedure is summarized in Figure~\hyperref[fig:tail]{2}.

\section{Numerical examples}
\label{sec:numerics}

\begin{table*}[]
    \centering
    \setlength{\tabcolsep}{0.5em} 
    \renewcommand{\arraystretch}{2}
    \begin{tabular}{ c|c|c|c } 
     \hline
      & Original Circuit & Tailgated Circuit & GAMESS (FCI) \\
      \hline
      & \multicolumn{3}{c}{BeH Frequencies} \\
      \hline
      $\omega_1$ & $5088.77 \ \text{cm}^{-1}$ & $2568.98 \ \text{cm}^{-1}$ & $2569.52 \ \text{cm}^{-1}$ \\
      \hline
      $\omega_2$ & $2301.41 \ \text{cm}^{-1}$ & $2300.68 \ \text{cm}^{-1}$ & $2298.31 \ \text{cm}^{-1}$ \\
      \hline
      $\omega_3$ & $1018.53 \ \text{cm}^{-1}$ & $784.72 \ \text{cm}^{-1}$ & $780.1 \ \text{cm}^{-1}$ \\
      \hline
      $\omega_4$ & $1018.53 \ \text{cm}^{-1}$ & $784.71 \ \text{cm}^{-1}$ & $780.1 \ \text{cm}^{-1}$ \\
      \hline
      & \multicolumn{3}{c}{H$_2$O Frequencies} \\
      \hline
      $\omega_1$ & $9901.80 \ \text{cm}^{-1}$ & $3845.51 \ \text{cm}^{-1}$  & $3812.60 \ \text{cm}^{-1}$ \\
      \hline
      $\omega_2$ & $3606.33 \ \text{cm}^{-1}$ & $3605.40 \ \text{cm}^{-1}$ & $3569.82 \ \text{cm}^{-1}$ \\
      \hline
      $\omega_3$ & $2047.07 \ \text{cm}^{-1}$ & $2043.55 \ \text{cm}^{-1}$ & $2036.99 \ \text{cm}^{-1}$ \\
     \hline\hline
\end{tabular}
    \caption{Calculated normal mode frequencies of BeH$_2$ and H$_2$O using the minimal STO-3G basis set. The parameter $\omega_j$ denotes the $j$-th normal mode frequency. Note that in the quantum methods,
    the BeH$_2$ ground state was prepared with fidelity $> 0.9998$ and the H$_2$O ground state was prepared with fidelity $> 0.9999$, but only the tailgated circuit gives accurate values in both cases.}
    \label{tab:mylabel}
\end{table*}

To highlight the effectiveness of the tailgating procedure, we study a practical example: computing the normal mode vibrational frequencies of molecules. Specifically, we consider BeH$_2$ and H$_2$O at their equilibrium coordinates. Each molecular Hamiltonian is of the form

\begin{equation}
    H(R) = \displaystyle\sum_{pq} h_{pq}(R) c_p^{\dagger} c_q + \frac{1}{2} \displaystyle\sum_{pqrs} h_{pqrs}(R) c_p^{\dagger} c_{q}^{\dagger} c_r c_s,
    \label{eqn:ham}
\end{equation}
where $R$ is a set of nuclear coordinates, $c^{\dagger}_p$ and $c_p$ are the fermionic creation and annihilation operators, acting on the $p$-th orbital, and

\begin{equation}
    h_{pq}(R) = \displaystyle\int dr \ \phi_p^{*}(r) \left( -\frac{\nabla^2}{2} - \displaystyle\sum_{I} \frac{Z_I}{|r - R_I|} \right) \phi_q(r),
    \label{eqn:one_elec}
\end{equation}

\begin{equation}
    h_{pqrs}(R) = \displaystyle\int dr_1 dr_2 \ \frac{\phi_p^{*}(r_1) \phi_q^{*}(r_2) \phi_r(r_2) \phi_s(r_1)}{|r_1 - r_2|},
    \label{eqn:two_elec}
\end{equation}

\noindent
are the one and two-electron integrals yielded from a set of molecular orbitals $\phi_p(r)$, usually obtained by using the Hartree-Fock method~\cite{arrazola2021differentiable}. Note that these molecular orbitals depend implicitly on $R$. The normal mode frequencies of a molecule are precisely the eigenvalues of the energy Hessian -- the matrix of second-order energy derivatives with respect to each of the nuclear coordinates, $\frac{\partial^2 E(R)}{\partial R_{i} \partial R_{j}}$.
\newline

We use the PennyLane library for quantum differentiable programming \cite{bergholm2018pennylane, arrazola2021differentiable} to simulate an adaptive circuit-building procedure, followed by VQE, to compute approximate ground-state energies at the equilibrium geometry. More specifically, we run an adaptive gate selection procedure using a pool of gates composed of all admissible single and double excitation gates for each particular molecule~\cite{arrazola2021universal}, and optimize the resulting circuit by minimizing the energy expectation value of the Hamiltonian, at geometry $R_0$, with gradient descent. The coordinates $R_0$ are the equilibrium geometries of the molecules obtained at the level of Hartree-Fock with the computational chemistry package GAMESS~\cite{GAMESS}. All calculations are performed using the minimal STO-3G basis set.

This process leads to short-depth circuits with optimized parameters that prepare approximate ground states. We then employ the tailgating procedure to adaptively add new gates to the circuit to make it suitable for second-order energy derivatives, which are calculated with the analytic formula derived in Ref.~\cite{mitarai2020theory},

\begin{multline}
\label{eq:mit1}
    \frac{\partial^2 \tilde{E}(R)}{\partial R_i \partial R_j} = \displaystyle\sum_{a} \frac{\partial \theta_a^{*}(R)}{\partial R_i} \frac{\partial}{\partial \theta_a} \Big\langle \psi(\theta) \Big| \frac{\partial H(R)}{\partial R_j} \Big| \psi(\theta) \Big\rangle \biggr\rvert_{\theta = \theta^{*}(R)} \\ + \Big\langle \psi(R) \Big| \frac{\partial^2 H(R)}{\partial R_i \partial R_j} \Big| \psi(R) \Big\rangle.
\end{multline}
The derivatives of the optimal circuit parameters $\theta^{*}(R)$ with respect to the nuclear coordinates are given by the response equation \cite{mitarai2020theory}:

\begin{equation}
\label{eq:mit2}
    \displaystyle\sum_{a} \frac{\partial \theta^{*}_a(R)}{\partial R_i} \frac{\partial^2 \tilde{E}(\theta, R)}{\partial \theta_b \partial \theta_a} \biggr\rvert_{\theta = \theta^{*}(R)} = -\frac{\partial}{\partial \theta_b} \frac{\partial \tilde{E}(R)}{\partial R_i},
\end{equation}
where, by the Feynman-Hellman theorem, the first-order energy derivative is given by

\begin{equation}
    \frac{\partial \tilde{E}(R)}{\partial R_i} = \left\langle \psi(R) \left| \frac{\partial H(R)}{\partial R_i} \right| \psi(R) \right\rangle.
\end{equation}

From here, we calculate the normal mode frequencies for the molecules, comparing the values yielded from utilizing the original variational circuit to a tailgated version of the circuit. To find the accurate values of the normal mode frequencies for reference, we use GAMESS to perform full configuration-interaction (FCI). The results of the calculations are summarized in Table~\hyperref[tab:mylabel]{I}. The frequencies yielded from the tailgated circuit are in much closer agreement with the value given by GAMESS FCI, for both BeH$_2$ and H$_2$O. The full details of the numerics can be found at \url{https://github.com/XanaduAI/tailgating}.
\newline

In addition to the normal mode frequency calculations, we can also return to plotting fidelity curves (as in Fig.~\hyperref[fig:gr]{1}) to gain a more intuitive understanding of the effects of tailgating circuits. In 
Fig.~\hyperref[fig:gr2]{3}, we provide a comparison of the curves of fidelity $| \langle \psi(R_0 + \delta e_{H}') | \psi_0(R_0 + \delta e_{H}') \rangle |^2$ for both the cases where $|\psi(R)\rangle$ is prepared by a non-tailgated and a tailgated circuit. In this example, we consider the molecular Hamiltonian $H(R)$ corresponding to H$_3^{+}$, where $\delta e_{H}'$ is a displacement vector corresponding to vertical stretching (in opposite directions) of two of the hydrogen atoms from their equilibrium configurations at $R_0$. As in Fig.~\hyperref[fig:gr]{1}, we plot the fidelity as a function of $\delta$.

\begin{figure}
    \centering
    \includegraphics[width=250pt]{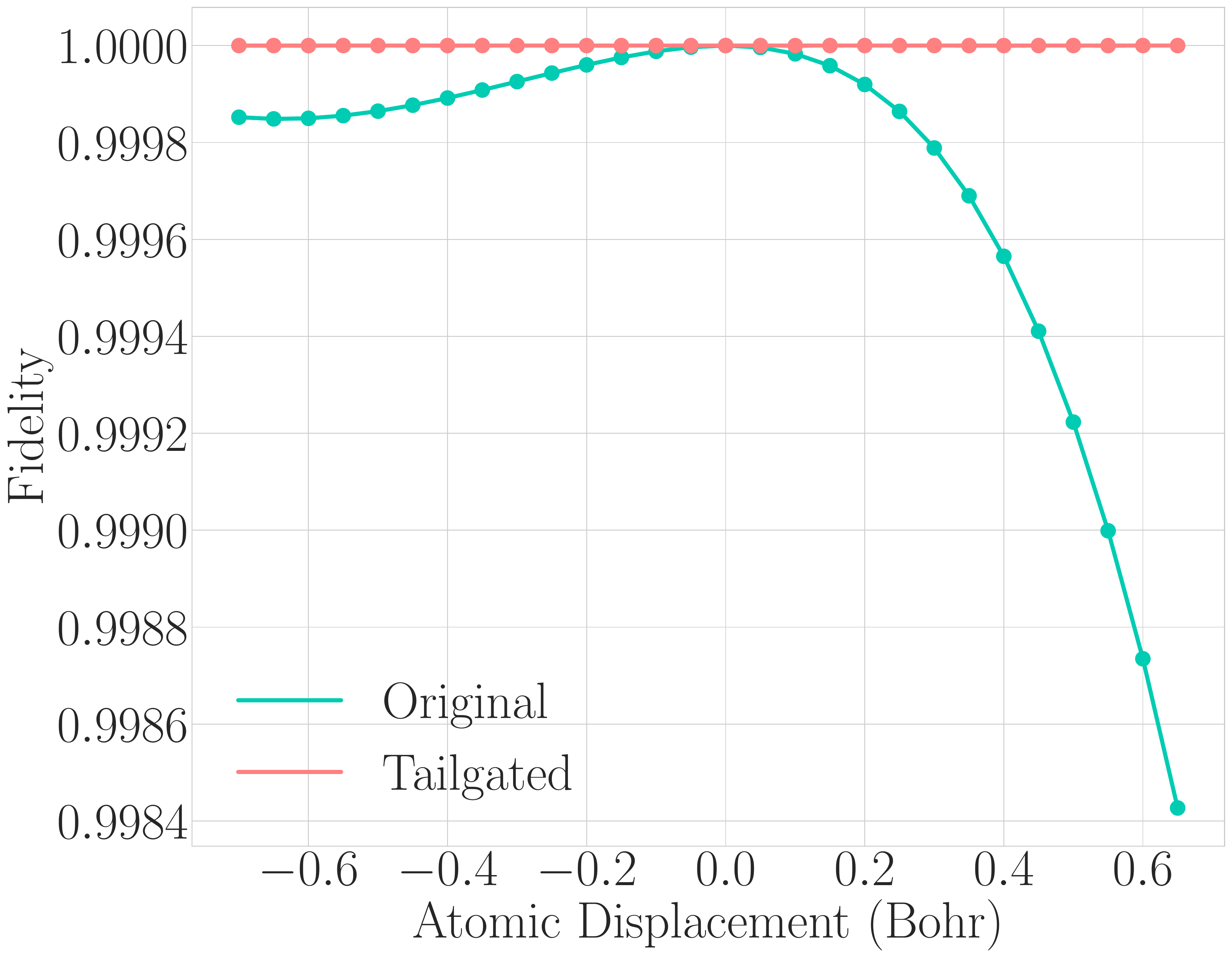}
    \caption{A plot showing the fidelity $|\langle \psi(R_0 + \delta e_H') | \psi_0(R_0 + \delta e_H') \rangle|^2$ between the true ground state of H$_3^{+}$, $|\psi_0(R)\rangle$, and $|\psi(R)\rangle$ prepared 
    by both non-tailgated and tailgated circuits, at $R = \delta e_H'$ for a range of $\delta$, which specifies the displacement of the hydrogens from equilibrium.}
    \label{fig:gr2}
\end{figure}

As can be seen in the figure, the tailgating procedure effectively ``flattens" the curve of fidelities between the prepared state and the true state, which suggests that derivatives of the difference $|\psi(R_0 + \delta e_{H}')\rangle - |\psi_0(R_0 + \delta e_{H}')\rangle$ will be small. This is exactly the property we desire to compute accurate energy derivatives, as was explained in Sec.~\hyperref[sec:background]{II}.
\vspace{-5pt}

\section{Conclusion}

This work outlines possible shortcomings of adaptively-prepared variational quantum circuits when computing higher-order energy derivatives, and proposes a solution which relies on the procedure of tailgating. We first discussed the
problem of computing accurate energy derivatives beyond the first-order with variational quantum circuits, and provide intuition as to why this problem arises. Specifically, we argued that errors in derivatives appear because the relative degree of accuracy to which some variational circuits can prepare correct ground states is large. 

Using this intuition, we then propose sufficient conditions for a variational circuit to yield correct $n$-th order energy derivatives in Theorem~\hyperref[thm:a]{1}.
The result we prove about $n$-th order energy derivatives and $(n - 1)$-th order Taylor expansions of $H(R)$ may find more general use in other algorithms related to high-order energy derivatives. Using Theorem~\hyperref[thm:a]{1}, we justify the tailgating procedure, in which extra gates are added to the end of an already-optimized variational circuit, and are used when computing derivatives of the circuit with respect to its parameters. 

Finally,
we provide numerical examples demonstrating that tailgating yields better results in certain cases, compared to standard adaptive procedures. We consider the concrete problem of computing normal mode frequencies of BeH$_2$ and H$_2$O molecules,
which requires knowledge of second-order energy derivatives with respect to the atomic coordinates of each molecule. Comparing tailgated and non-tailgated circuits to the true values obtained from classical methods, tailgating can lead to a substantial increase in the accuracy of the calculations. Normal mode frequencies are critical for understanding the vibrational and thermodynamic properties of molecules, as they determine vibronic spectra, vibrational partition functions, and other observable quantities. To use variational circuits for chemistry tasks that involve computing high-order energy derivatives, we argue that a procedure such as tailgating is valuable to ensure the accuracy of the calculations while maintaining the lower cost arising from adaptive procedures. 

We note that our work does not address other important challenges in quantum algorithms for quantum chemistry, which still face many obstacles before they can become competitive with existing classical methods. In particular, estimating high-order energy derivatives using variational algorithms suffers from a more pronounced version of the ``measurement problem", which refers to the often prohibitive number of circuit executions needed to estimate accurate expectation values. Thus, while we tackle issues regarding the quality and gate count of variational circuits, other obstacles remain to be overcome. Ultimately, we hope that tailgating will emerge as a useful member of the toolbox of techniques that scientists can use to construct variational circuits for particular classes of problems in chemistry, and beyond.


\section{Acknowledgements}

We thank Nathan Wiebe and John Sipe for valuable conversations relating to this project. This research was partially funded by a MITACS Accelerate Grant.

\bibliographystyle{apsrev}
\bibliography{refs.bib}

\appendix
\label{sec:app}

\section{Proof of Theorem 1}

In this section, we present a proof of Theorem~\ref{thm:a}, which was introduced in the main text. We begin by proving a lemma. Note in the proof that follows, for notational convenience, we define

\begin{equation}
    \Bigg| \frac{\partial^{n} \phi(R_0) }{\partial R_{j_1} \cdots \partial R_{j_n}} \Bigg\rangle := \frac{\partial^n | \phi(R_0) \rangle}{\partial R_{j_1} \cdots \partial R_{j_n}}.
\end{equation}

\begin{lem}
\label{lem:1}
Let $H(R)$ be a parameterized Hamiltonian, $R_0$ a set of parameters, and $|\phi(R)\rangle$ a parameterized state vector. Define

\begin{equation}
C(R) = \langle \phi(R) | H(R) | \phi(R) \rangle.
\end{equation}
If $|\phi(R_0)\rangle$ is an eigenstate of $H(R_0)$, then any $n$-th order derivative of $C$ at $R_0$ can be written as a sum of inner products involving $0$-th to $(n - 1)$-th order derivatives of $|\phi(R)\rangle$ evaluated at $R_0$.
\end{lem}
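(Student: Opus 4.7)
The plan is to expand the $n$-th derivative of $C(R)$ at $R_0$ using the generalized Leibniz rule and isolate the terms that carry the highest-order derivatives of $|\phi\rangle$. Writing the multi-index as $J = \{j_1,\ldots,j_n\}$, the expansion decomposes $\partial^n_J C(R_0)$ into a sum over ordered partitions $J = S \sqcup T \sqcup U$ of terms of the form $\langle \partial_S \phi(R_0) | \partial_T H(R_0) | \partial_U \phi(R_0)\rangle$. I would first observe that an $n$-th order derivative of $|\phi\rangle$ appears only when $|S| = n$ or $|U| = n$, which forces the other two parts to be empty. All remaining terms in the expansion already have derivatives of $|\phi\rangle$ of order at most $n-1$, so they are already of the desired form.

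Next, I would apply the eigenstate hypothesis $H(R_0)|\phi(R_0)\rangle = E_0 |\phi(R_0)\rangle$ (and its adjoint) to reduce the two distinguished boundary terms to $E_0 \bigl( \langle \partial^n_J \phi(R_0) | \phi(R_0)\rangle + \langle \phi(R_0) | \partial^n_J \phi(R_0)\rangle \bigr)$. The key step is then to eliminate the remaining $\partial^n_J |\phi\rangle$ using the normalization identity $\langle \phi(R) | \phi(R)\rangle = 1$, whose $n$-th derivative must vanish identically. Differentiating this constraint by Leibniz and extracting the two ``all-on-one-side'' contributions gives
\[
\langle \partial^n_J \phi(R_0) | \phi(R_0)\rangle + \langle \phi(R_0) | \partial^n_J \phi(R_0)\rangle = -\!\!\sum_{\substack{S \sqcup U = J \\ 1 \le |S|,|U| \le n-1}}\!\! \langle \partial_S \phi(R_0) | \partial_U \phi(R_0)\rangle,
\]
so the $n$-th order derivatives of $|\phi\rangle$ can be traded for a sum of inner products of derivatives of order at most $n-1$. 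Substituting this back into the Leibniz expansion for $\partial^n_J C(R_0)$ yields the claim.

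The main obstacle I expect is purely bookkeeping: tracking ordered partitions of $J$ when the indices $j_1,\ldots,j_n$ are not necessarily distinct, and verifying rigorously that the ``all-derivatives-on-bra'' and ``all-derivatives-on-ket'' terms are indeed the only carriers of $n$-th order derivatives of $|\phi\rangle$ in the trinomial expansion. A secondary but essential point is to make explicit the implicit assumption that $|\phi(R)\rangle$ is normalized on a neighborhood of $R_0$; without this, the cancellation from differentiating $\langle\phi(R)|\phi(R)\rangle=1$ is unavailable and the lemma as stated fails. Everything else reduces to a routine application of the Leibniz rule.
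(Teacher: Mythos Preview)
Your proposal is correct and follows essentially the same route as the paper: isolate the two Leibniz terms carrying $\partial^n_J|\phi\rangle$, use the eigenstate hypothesis to convert $H(R_0)|\phi(R_0)\rangle$ to a scalar, and eliminate the remaining $n$-th order state derivative via the differentiated normalization constraint. The only cosmetic difference is that the paper combines the bra and ket terms as $2\,\mathrm{Re}\langle\partial^n_J\phi|H(R_0)|\phi\rangle$ rather than tracking the partition $S\sqcup T\sqcup U$ explicitly; your observation that normalization of $|\phi(R)\rangle$ in a neighborhood is an implicit hypothesis is a worthwhile clarification.
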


\begin{proof}
It can be checked that the only $n$-th order derivative of $|\phi(R)\rangle$ in an $n$-th order derivative of $C$ appears in a term of the form

\begin{equation}
   2 \text{Re} \left\langle \ \frac{\partial^{n} \phi(R_0)}{\partial R_{j_1} \cdots \partial R_{j_n}} \Bigg| H(R_0) \Bigg| \phi(R_0) \right\rangle.
   \label{eqn:exp}
\end{equation}
Hence, the problem of writing an $n$-th order derivative in terms of $0$-th to $(n - 1)$-th order derivatives of $|\phi(R)\rangle$ is reduced to writing the above expression in terms of such derivatives. We know that

\begin{equation}
    H(R_0) |\phi(R_0)\rangle = C(R_0) | \phi(R_0) \rangle,
\end{equation}

as $|\phi(R_0)\rangle$ is an eigenvector. Hence,

\begin{align}
   &2 \text{Re} \left\langle \ \frac{\partial^{n} \phi(R_0)}{\partial R_{j_1} \cdots \partial R_{j_n}} \Bigg| H(R_0) \Bigg| \phi(R_0) \right\rangle =\nonumber\\
   &2 \text{Re} \ C(R_0) \left\langle \ \frac{\partial^{n} \phi(R_0)}{\partial R_{j_1} \cdots \partial R_{j_n}} \Bigg| \psi(R_0) \right\rangle.
   \label{eqn:r1}
\end{align}

Finally, since $|\phi(R)\rangle$ is normalized, we have $\langle \phi(R) | \phi(R) \rangle = 1$ for all $R$, which implies that any $k$-th order derivative, for $k \geq 1$, of this inner product will be equal to $0$. It follows that

\begin{align}
    &\frac{\partial^{n} \langle \phi(R_0) | \phi(R_0) \rangle}{\partial R_{j_1} \cdots \partial R_{j_n}}\nonumber \\
    &= 2 \text{Re} \left\langle \ \frac{\partial^{n} \phi(R_0)}{\partial R_{j_1} \cdots \partial R_{j_n}} \Bigg| \phi(R_0) \right\rangle + S_{n - 1}(\phi(R_0)) = 0,
    \label{eqn:r2}
\end{align}

\noindent
where $S_{n - 1}(\phi(R_0))$ is a sum of inner products involving $0$-th to $(n - 1)$-th order derivatives of $|\phi(R_0)\rangle$. Rearranging, and using Eq.~\eqref{eqn:r1} and Eq.~\eqref{eqn:r2}, gives that the expression in Eq.~\eqref{eqn:exp} is equal to $-2 \text{Re} [C(R_0) S_{n - 1}(\phi(R_0))]$: an expression involving only $0$-th to $(n - 1)$-th order derivatives of $|\phi(R_0)\rangle$. This completes the proof.
\end{proof}

\begin{cor}
If for a circuit $U$, the corresponding $p$-th order derivatives of $|\psi(R)\rangle$ match those of $|\psi_0(R)\rangle$ for $ 0 \leq p \leq n - 1$ at $R_0$, then $U$ can yield accurate $n$-th order energy derivatives at $R_0$.
\label{cor:a}
\end{cor}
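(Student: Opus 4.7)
The plan is to reduce the corollary to two direct applications of Lemma~\ref{lem:1} and then exploit that the reduction it provides is state-independent. First I would note that the hypothesis at $p = 0$ gives $|\psi(R_0)\rangle = |\psi_0(R_0)\rangle$, so $|\psi(R_0)\rangle$ inherits the eigenstate property of $|\psi_0(R_0)\rangle$ with respect to $H(R_0)$ and $\tilde{E}(R_0) = E(R_0)$. Lemma~\ref{lem:1} therefore applies both to $\tilde{E}(R) = \langle\psi(R)|H(R)|\psi(R)\rangle$ (with $|\phi(R)\rangle = |\psi(R)\rangle$) and to $E(R) = \langle\psi_0(R)|H(R)|\psi_0(R)\rangle$ (with $|\phi(R)\rangle = |\psi_0(R)\rangle$).

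Next I would expand any $n$-th order partial derivative of $C(R)$ at $R_0$ via the multivariate Leibniz rule. Every term takes the form $\langle \partial^a \phi(R_0) | \partial^b H(R_0) | \partial^c \phi(R_0) \rangle$ with $a + b + c = n$. The only terms carrying an $n$-th order derivative of the state are $a = n$, $b = c = 0$ and its complex conjugate; Lemma~\ref{lem:1} rewrites these using the eigenvalue equation $H(R_0)|\phi(R_0)\rangle = C(R_0)|\phi(R_0)\rangle$ and the normalization identity $\langle \phi(R)|\phi(R)\rangle = 1$ into a sum $S_{n-1}(\phi(R_0))$ of inner products of $0$-th through $(n-1)$-th order state derivatives, multiplied by $C(R_0)$. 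All remaining Leibniz terms already involve only state derivatives of order $\leq n - 1$. Crucially, both the Leibniz combinatorics and the reduction identity depend only on $n$ and on the multi-indices, not on the identity of the state, so there is a single universal functional formula
\begin{equation}
\frac{\partial^n C(R_0)}{\partial R_{j_1}\cdots\partial R_{j_n}} = \mathcal{F}\bigl(C(R_0),\, \{\partial^k H(R_0)\}_{k \le n},\, \{\partial^k |\phi(R_0)\rangle\}_{k \le n-1}\bigr)
\end{equation}
that applies equally when $C = \tilde{E}$ with $|\phi\rangle = |\psi\rangle$ and when $C = E$ with $|\phi\rangle = |\psi_0\rangle$.

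Finally I would apply this formula to both $\tilde{E}$ and $E$ at $R_0$. The Hamiltonian-derivative inputs of $\mathcal{F}$ are literally identical; the eigenvalue input $C(R_0)$ coincides because $\tilde{E}(R_0) = E(R_0)$ by the $p = 0$ hypothesis; and the state-derivative inputs of orders $0$ through $n - 1$ coincide by the hypothesis of the corollary. Substitution therefore yields $\partial^n \tilde{E}(R_0) = \partial^n E(R_0)$ for every choice of coordinates $\{R_{j_1}, \ldots, R_{j_n}\}$. The main obstacle is really just articulating the state-independence of the reduction in Lemma~\ref{lem:1}, so that the same expression can be invoked twice and the matched inputs produce matched outputs; no further computation beyond the Leibniz expansion already implicit in Lemma~\ref{lem:1} is needed.
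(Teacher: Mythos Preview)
Your proposal is correct and follows essentially the same approach as the paper's proof: apply Lemma~\ref{lem:1} to both $\tilde{E}$ and $E$, observe that each $n$-th order derivative is expressed through the same formula in terms of $0$-th through $(n-1)$-th order state derivatives, and conclude equality from the matching inputs. You are simply more explicit than the paper about two points it leaves implicit---that the $p=0$ hypothesis is what grants $|\psi(R_0)\rangle$ the eigenstate property needed to invoke Lemma~\ref{lem:1}, and that the reduction furnished by the lemma is a state-independent universal formula $\mathcal{F}$ so that identical inputs force identical outputs.
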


\begin{proof}
Recall the definitions of $|\psi(R)\rangle$ and $\tilde{E}(R)$ from Sec.~\hyperref[sec:background]{II}. Since, when $n \geq 1$, both $|\psi(R)\rangle$ and $|\psi_0(R)\rangle$ satisfy the conditions of Lemma~\hyperref[lem:1]{1}, it follows that both $\frac{\partial^{p} \tilde{E}(R_0)}{\partial R_{j_1} \cdots \partial R_{j_p}}$ and $\frac{\partial^{p} E(R_0)}{\partial R_{j_1} \cdots \partial R_{j_p}}$ can be written in terms of inner products of $0$-th to $(p - 1)$-th order derivatives of $|\psi(R)\rangle$ and $|\psi_0(R)\rangle$ respectively. By assumption, we have 

\begin{equation}
\frac{\partial^{p} | \psi(R_0) \rangle}{\partial R_{j_1} \cdots \partial R_{j_p}} = \frac{\partial^{p} | \psi_0(R_0) \rangle}{\partial R_{j_1} \cdots \partial R_{j_p}}
\end{equation}

\noindent
for all $p$ from $1$ to $n - 1$, and all choice of coordinates, $\{R_{j_1}, \ \ldots, \ R_{j_p}\}$. Thus, the representations of the $n$-th order derivatives of $\tilde{E}$ and $E$ in terms of such derivatives must be equal as well.
\end{proof}

\begin{lem}
 Let $|\phi(R)\rangle$ be the ground state of $H_{n - 1}(R)$. The derivatives of $|\psi_0(R)\rangle$ and $|\phi(R)\rangle$ agree to order $n - 1$ when evaluated at $R_0$.
\end{lem}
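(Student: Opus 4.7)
The plan is to establish, by induction on $p$, that
\begin{equation}
\frac{\partial^p |\psi_0(R_0)\rangle}{\partial R_{j_1} \cdots \partial R_{j_p}} = \frac{\partial^p |\phi(R_0)\rangle}{\partial R_{j_1} \cdots \partial R_{j_p}}
\end{equation}
for every $0 \leq p \leq n-1$ and every choice of coordinates. The base case $p = 0$ is immediate: by definition of the Taylor expansion, $H(R_0) = H_{n-1}(R_0)$, and non-degeneracy forces the two ground states to coincide once a common phase convention is fixed.

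For the inductive step, I would fix $p \leq n-1$ and differentiate the eigenvalue equation $H(R)|\psi_0(R)\rangle = E(R)|\psi_0(R)\rangle$ exactly $p$ times with respect to the chosen coordinates, evaluated at $R_0$. The Leibniz rule produces an identity of the schematic form
\begin{equation}
[H(R_0) - E(R_0)]\, \frac{\partial^p |\psi_0(R_0)\rangle}{\partial R_{j_1} \cdots \partial R_{j_p}} = F_p^{(\psi_0)},
\end{equation}
where $F_p^{(\psi_0)}$ is a sum of terms involving only derivatives of $H$ of order at most $p$ at $R_0$, derivatives of $E$ of order at most $p$ at $R_0$, and derivatives of $|\psi_0\rangle$ of order at most $p - 1$ at $R_0$. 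The identical manipulation for $(H_{n-1}, |\phi\rangle)$ yields the analogous identity with some $F_p^{(\phi)}$. I would then apply Lemma~\ref{lem:1} to rewrite every $q$-th energy derivative appearing in these expressions ($q \leq p$) using only Hamiltonian derivatives of order at most $q$ and state derivatives of order at most $q - 1$. Combining this with the inductive hypothesis and the Taylor matching of $H$ and $H_{n-1}$ through order $n-1$ (and using $p \leq n-1$), every ingredient of $F_p^{(\psi_0)}$ agrees with its counterpart in $F_p^{(\phi)}$. Subtracting the two equations gives
\begin{equation}
[H(R_0) - E(R_0)] \left( \frac{\partial^p |\psi_0(R_0)\rangle}{\partial R_{j_1} \cdots \partial R_{j_p}} - \frac{\partial^p |\phi(R_0)\rangle}{\partial R_{j_1} \cdots \partial R_{j_p}} \right) = 0.
\end{equation}

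The hard part, I anticipate, is closing the loop after this cancellation. By non-degeneracy, the kernel of $H(R_0) - E(R_0)$ is the one-dimensional subspace spanned by $|\psi_0(R_0)\rangle$, so the two $p$-th derivatives could a priori still differ by a scalar multiple of the ground state — the familiar gauge freedom inherent in differentiating a parameterized eigenstate. I would fix this by imposing the same normalization and phase convention on both branches (for instance, requiring $\langle \psi_0(R_0) | \psi_0(R)\rangle$ and $\langle \phi(R_0) | \phi(R)\rangle$ to be real and positive in a neighbourhood of $R_0$). Differentiating this scalar constraint $p$ times and invoking the inductive hypothesis forces the component of each $p$-th derivative along $|\psi_0(R_0)\rangle$ to take the same value in the two problems, eliminating the residual ambiguity and yielding equality of the full $p$-th order derivatives. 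This closes the induction and proves the lemma.
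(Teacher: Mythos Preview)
Your argument is correct and shares the paper's inductive backbone---matching eigenvector derivatives order by order using that $H$ and $H_{n-1}$ have identical derivatives through order $n-1$ at $R_0$---but the mechanics of the induction step differ. The paper invokes the explicit first-order perturbation formula from Appendix~\ref{sec:appb},
\[
\frac{\partial |v_j\rangle}{\partial R_k} = \sum_{i \neq j} \frac{\langle v_j | \partial_{R_k} H | v_i\rangle}{\lambda_j - \lambda_i}\, |v_i\rangle,
\]
together with the Hellmann--Feynman relation for $\partial_{R_k}\lambda_j$, and differentiates these repeatedly: the $p$-th eigenvector derivative is then a fixed algebraic expression in the eigenvectors, eigenvalues, and Hamiltonian derivatives of order $\leq p$, all of which coincide at $R_0$ for the two Hamiltonians. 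Because that formula has no $i=j$ component, the phase is already pinned down and no separate kernel argument is needed. Your route---differentiating the eigenvalue equation directly, using Lemma~\ref{lem:1} to reduce the energy derivatives appearing in $F_p$ to lower-order state data, and then resolving the residual $\ker\bigl(H(R_0)-E(R_0)\bigr)$ ambiguity via an explicit normalization/phase convention---arrives at the same conclusion without appealing to the full spectral sum, at the cost of making the gauge-fixing step explicit. Either presentation is sound.
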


\begin{proof}
Note that
\vspace{-5pt}

\begin{multline}
    H(R) = H_{n - 1}(R) \\ + \frac{1}{n!} \displaystyle\sum_{j_1, \ldots, j_n = 1}^N \frac{\partial^n H(R_0)}{\partial R_{j_1} \cdots \partial R_{j_n}} (R_{j_1} - R_{0j_1}) \cdots (R_{j_n} - R_{0j_n}) \\ + \cdots
    \label{eqn:d}
\end{multline}
If $H(R)$ is a non-degenerate Hamiltonian with normalized eigenvectors $|v_j(R)\rangle$ and corresponding eigenvalues $\lambda_j(R)$, then

\begin{equation}
\frac{\partial |v_j(R)\rangle}{\partial R_k} = \displaystyle\sum_{i \neq j} \frac{\left\langle v_j(R) \left| \frac{\partial H(R)}{\partial R_k} \right| v_i(R) \right\rangle}{\lambda_j(R) - \lambda_i(R)} |v_i(R)\rangle
\label{eqn:per}
\end{equation}
(see Appendix~\ref{sec:appb}), and
\begin{equation}
    \frac{\partial \lambda_j(R)}{\partial R_k} = \left\langle v_j(R) \left| \frac{\partial H(R)}{\partial R_k} \right| v_j(R) \right\rangle,
    \label{eqn:per2}
\end{equation}
from the Feynman-Hellmann theorem. It follows from Eq.~\eqref{eqn:per} that $\frac{\partial^{n} |v_j(R)\rangle}{\partial R_{k_1} \cdots \partial R_{k_n}}$ can be written as a sum of terms involving $(n - 1)$-th order derivatives of the eigenvectors and eigenvalues, and $n$-th order derivatives of the Hamiltonian. In addition, it follows from Eq.~\eqref{eqn:per2} that $(n - 1)$-th order derivatives of the eigenvalues can be written in terms of $(n - 1)$-th order derivatives of the Hamiltonian and $(n - 2)$-th order derivatives of the eigenvectors.
\newline

Thus, continuing inductively, it follows that $\frac{\partial^{n - 1} |v_j(R)\rangle}{\partial R_{k_1} \cdots \partial R_{k_{n - 1}}}$ can be written in terms of the eigenvectors $|v_i(R)\rangle$, eigenvalues $\lambda_i(R)$, and $0$-th to $(n - 1)$-th order derivatives of the Hamiltonian.
\newline

Since $H(R_0) = H_{n - 1}(R_0)$, so their eigenvectors and eigenvalues at $R_0$ are the same, and from Eq.~\eqref{eqn:d} the derivatives of the two Hamiltonians agree to the $(n - 1)$-th order at $R_0$, it follows that the derivatives of $|\psi_0(R)\rangle$ and $|\phi(R)\rangle$ agree to order $n - 1$ at $R_0$.
\end{proof}

Now, we are able to prove the desired result.
\newline

\noindent
\textbf{Theorem 1.} \textit{Let $|\phi(R)\rangle$ be the ground state of $H_{n - 1}(R)$. Suppose that there exists $\delta > 0$ such that a circuit $U(\theta)$ can prepare the state $|\phi(R)\rangle$ for all $R$ such that $0 < \| R - R_0 \| < \delta$. Then it holds that}

\begin{equation}
\frac{\partial^{p} E(R_0)}{\partial R_{j_1} \cdots \partial R_{j_p}} = \frac{\partial^{p} \tilde{E}(R_0)}{\partial R_{j_1} \cdots \partial R_{j_p}}
\end{equation}

\noindent
\textit{for all $p$ such that $0 \leq p \leq n$, and all choices of coordinates, $\{R_{j_1}, \ ..., \ R_{j_p}\}$.}

\begin{proof}
Recall that $|\psi(R)\rangle$ is the state such that $U(\theta(R))$ closest approximates $|\psi_0(R)\rangle$ (with respect to the vector $2$-norm). Thus, we have that

\begin{equation}
\| |\psi(R)\rangle - |\psi_0(R)\rangle \| \leq \| |\phi(R)\rangle - | \psi_0(R)\rangle \|
\end{equation}
when $0 \leq ||R - R_0|| < \delta$, as $U$ can also prepare $|\phi(R)\rangle$. It follows immediately that the derivatives of $|\psi(R)\rangle$ and $|\psi_0(R)\rangle$ will agree to order $n - 1$ at $R_0$, as those of $|\phi(R)\rangle$ and $|\psi_0(R)\rangle$ agree. We then apply Corollary~\hyperref[cor:a]{1} to arrive at the result.
\end{proof}

\section{Derivatives of Eigenvectors}
\label{sec:appb}

In this section, we briefly outline how to compute derivatives of eigenvectors corresponding to a non-degenerate Hamiltonian. Suppose $H(R)$ is a Hamiltonian, such that

\begin{equation}
    H(R) |v_i(R)\rangle = \lambda_i(R) |v_i(R)\rangle,
    \label{eq:eig}
\end{equation}

where we assume $\lambda_i(R)$ and $|v_i(R)\rangle$ are smooth functions such that the set of eigenvectors $|v_i(R)\rangle$ forms an orthonormal basis, which is guaranteed by the spectral theorem. Taking the derivative of both sides of Eq.~\eqref{eq:eig} an re-arranging, we get
\vspace{-5pt}

\begin{equation}
    \left[ \lambda_i(R) - H(R) \right] \frac{\partial |v_i(R)\rangle}{\partial R_k} = \left[ \frac{\partial H(R)}{\partial R_k} - \frac{\partial \lambda_i(R)}{\partial R_k} \right] |v_i(R)\rangle.
\end{equation}

Taking the inner product of both sides of the above equation with $|v_j(R)\rangle$, it follows that

\begin{multline}
    (\lambda_i(R) - \lambda_j(R)) \Big\langle v_j(R) \Big| \frac{\partial v_i(R)}{\partial R_k} \Big\rangle = \\ \Big\langle v_j(R) \Big| \frac{\partial H(R)}{\partial R_k} \Big| v_i(R) \Big\rangle - \frac{\partial \lambda_i(R)}{\partial R_k} \langle v_j(R) | v_i(R) \rangle.
\end{multline}

In the case that $i \neq j$, then $\langle v_j(R) | v_i(R) \rangle = 0$, and we can re-arrange to get

\begin{equation}
    \Big\langle v_j(R) \Big| \frac{\partial v_i(R)}{\partial R_k} \Big\rangle = \frac{\Big\langle v_j(R) \Big| \frac{\partial H(R)}{\partial R_k} \Big| v_i(R) \Big\rangle}{\lambda_i(R) - \lambda_j(R)}
\end{equation}

In the case that $i = j$, then since $\langle v_i(R) | v_i(R) \rangle = 1$ for all $R$, we will get

\begin{equation}
    \frac{\partial}{\partial R_k} \langle v_i(R) | v_i(R) \rangle = 2 \text{Re} \left[ \Big\langle v_i(R) \Big| \frac{\partial v_i(R)}{\partial R_k} \Big \rangle \right] = 0
\end{equation}

In general, we can assume that $\Big\langle v_i(R) \Big| \frac{\partial v_i(R)}{\partial R_k} \Big\rangle$ is real, as we can always multiply either state by a non-physical global phase. 
Hence, this inner product will be equal to $0$. Therefore, since the $|v_j(R)\rangle$ form a basis, we will have

\begin{equation}
\frac{\partial | v_i(R) \rangle}{\partial R_k} = \displaystyle\sum_{j \neq i} \frac{\Big\langle v_j(R) \Big| \frac{\partial H(R)}{\partial R_k} \Big| v_i(R) \Big\rangle}{\lambda_i(R) - \lambda_j(R)} |v_j(R)\rangle.
\end{equation}

\end{document}